\newtheorem{prop}{Proposition}
\newtheorem*{sym*}{SYM$^*$}
\newtheorem*{sym**}{SYM$^{**}$}
\newtheorem*{subgroup}{Subgroup}
\newtheorem*{sym+}{SYM$^+$}
\theoremstyle{definition}
\newtheorem{example1}{Example}
\theoremstyle{remark}
\tikzset{node distance=1.6cm, auto}
\newcommand\xqed[1]{%
  \leavevmode\unskip\penalty9999 \hbox{}\nobreak\hfill
  \quad\hbox{#1}}
\newcommand\corner{\xqed{$\lrcorner$}}
\newenvironment{example}[0]{\begin{example1}}{\corner\end{example1}}
\title{On Automorphism Criteria for Comparing Amounts of Mathematical Structure}
\author[1]{Thomas William Barrett\thanks{tbarrett@philosophy.ucsb.edu}}
\author[2]{JB Manchak\thanks{jmanchak@uci.edu}}
\author[2]{James Owen Weatherall\thanks{weatherj@uci.edu}}
\affil[1]{Department of Philosophy, University of California, Santa Barbara}
\affil[2]{Department of Logic and Philosophy of Science, University of California, Irvine}
\date{}
\begin{document}
\maketitle

\begin{abstract}
\cite{wilhelm2021} has recently defended a criterion for comparing structure of mathematical objects, which he calls Subgroup. He argues that Subgroup is better than SYM$^*$, another widely adopted criterion. We argue that this is mistaken; Subgroup is strictly worse than SYM$^*$.  We then formulate a new criterion that improves on both SYM$^*$ and Subgroup, answering Wilhelm's criticisms of SYM$^*$ along the way. We conclude by arguing that no criterion that looks only to the automorphisms of mathematical objects to compare their structure can be fully satisfactory.
\end{abstract}

\section{Introduction}

There is a long tradition in the philosophy of physics of arguments that one theory, or formulation of a theory, is superior to another, empirically equivalent theory, on grounds of structural parsimony.  The idea is that if two theories have the same empirical content, but one theory's models have less structure than the other theory's models, one should infer that the first theory attributes less structure to the world---and therefore should be preferred.  Over the past decade, much effort has been devoted to making the comparisons of ``amount of structure'' involved in such arguments precise.  

The most common type of criterion is based on the idea that one can compare amounts of structure by looking to the symmetries, or automorphisms, of the mathematical objects in question. If a mathematical object has more automorphisms, then it intuitively should have less structure that these automorphisms are required to preserve.  The amount of structure that a mathematical object has is, in some sense, inversely proportional to the size of the object's automorphism group. \citet[p.~36]{earman1989} puts this basic idea as follows: ``As the [\ldots] structure becomes richer, the symmetries become narrower.'' 

To have a precise criterion, one needs to clarify the sense in which one object may have `fewer' automorphisms than another. \cite{swansonhalvorsonunpublished} and \cite{barrett2014,barrett2014b} have proposed the following.

\begin{sym*}
A mathematical object $X$ has at least as much structure as a mathematical object $Y$ if (and only if) $\text{Aut}(X)\subseteq\text{Aut}(Y)$.
\end{sym*}

The condition $\text{Aut}(X)\subseteq\text{Aut}(Y)$, i.e.~that the automorphism group of $X$ $\text{Aut}(X)$ is a subset of that the automorphism group of $Y$ $\text{Aut}(Y)$, is one way to make precise the idea that $\text{Aut}(X)$ is `not larger than' $\text{Aut}(Y)$.\footnote{We here follow the statement of SYM$^*$ given in \cite{wilhelm2021}. \cite{barrett2014, barrett2014b} states SYM$^*$ using the relation ``more structure than'', rather than Wilhelm's ``at least as much structure as''.} SYM$^*$ makes intuitive verdicts in many easy cases of structural comparison.  Moreover, while it is not always explicitly mentioned, SYM$^*$ is the standard criterion in the literature. 

\cite{wilhelm2021} has recently argued that a different criterion, which he calls Subgroup, is superior to SYM$^*$.\footnote{Wilhelm introduces two criteria, Subgroup and Subgroup$_2$, where Subgroup$_2$ is strictly more liberal.  We focus on Subgroup in what follows because our intent is to argue that Subgroup is already too liberal. We return to Subgroup$_2$ in the conclusion.}  Subgroup, too, has been used in the literature.  For instance, Subgroup arises as an application of a category-theoretic criterion of structure comparison used by \citet{weatherallNG,weatherallgauge}, \citet{rosenstock2019}, and \citet{bradleyweatherall2020}, among others, based on the property-structure-stuff framework of \citet{baezsps}, to the case of categories consisting of single objects and their automorphisms.  Likewise, \citet[p. 823]{barrett2014} considers---and rejects---a very similar condition, which he calls SYM$^{**}$.\footnote{See note \ref{sym**}.}  But Wilhelm presents the only sustained defense of Subgroup in the literature, and so we focus on his treatment.

We have a few aims in this paper, all of which dovetail off of Wilhelm's discussion. First, we want to show that while SYM$^*$ does have shortcomings, Subgroup is worse.  As we will show, SYM$^*$ is weak because it cannot rule on pairs of objects with different underlying sets; Subgroup is weak(er) because it gets clear cases wrong.  Second, we will explain \emph{why} SYM$^*$ succeeds in the cases to which it applies: it captures an important sense of structural comparison, which we call the ``implicit definability conception'' (IDC).  Subgroup does not have any relationship to the IDC.  

Nonetheless, Wilhelm's arguments are suggestive. Drawing on Wilhelm's intuition, we propose another criterion that improves on both Subgroup and SYM$^*$.  This criterion is, in a sense we make precise, the best one can do using only automorphisms.  Even so, we will argue, it is not good enough. We will next argue that \textit{no} criterion for comparing amounts of structure that looks solely to automorphisms can be adequate. In brief, automorphisms alone do not encode all of the relevant facts about the structure of a mathematical object. We conclude with a brief discussion of how to think about comparisons of structure in light of the foregoing.

\section{SYM$^*$, Subgroup, and Their Problems}

Wilhelm's main concern with SYM$^*$ is a problem we will call \textbf{sensitivity}.\footnote{This problem is gestured at in the discussion of a criterion called SYM$^{**}$ by \cite{barrett2014}, and it is mentioned explicitly by \citet[p.~3]{barrett2014b}. It is also discussed by \cite{barrett2018a}.} In brief, SYM$^*$ is too sensitive to the underlying sets of the objects being compared.  To make this point, Wilhelm uses the example of two isomorphic groups that have different underlying sets. SYM$^*$ says that these two groups have `incomparable' amounts of structure, in the sense that neither has more nor less structure than the other. This is because no automorphism of the first group is also an automorphism of the second group. The two groups are isomorphic, however, and therefore should have \textit{the same} structure. \cite[p.~6361]{wilhelm2021} remarks that ``structural comparisons should imply that if two mathematical objects are isomorphic, then those objects have the same amount of structure. SYM$^*$ violates this condition. And that is a reason to reject it.'' 



We agree with Wilhelm in this case.  One can also find similar examples where one object should have \emph{more} structure than another, such as a topological space $(A,\tau)$ and a set $B$, where $A$ does not equal $B$.  Intuitively, one might think that a set with topology should have more structure than a bare set, even if the sets are not the same.  But again, SYM$^*$ does not rule on such cases.  Examples like these show that there is a sense in which SYM$^*$ is too \textit{strict} a criterion for comparing amounts of structure. There are pairs of objects $X$ and $Y$ such we want to say that $X$ has at least as much structure as $Y$, but SYM$^*$ does not make this verdict. 


It is to address sensitivity that Wilhelm proposes Subgroup.

\begin{subgroup}
A mathematical object $X$ has at least as much structure as a mathematical object $Y$ if (and only if) $\text{Aut}(X)$ is isomorphic to a subgroup of $\text{Aut}(Y)$.\footnote{\label{sym**} Subgroup differs from SYM$^{**}$, introduced and rejected by \citet{barrett2014}, in that SYM$^{**}$ says ``more'' where Subgroup says ``at least as much''.  Wilhelm's modification avoids the problem that some objects have more structure than themselves.}
\end{subgroup}

Despite the names, the important difference between Subgroup and SYM$^*$ is \emph{not} that Subgroup concerns the sub\emph{group} relation, whereas SYM$^*$ concerns the sub\emph{set} relation.  If $\text{Aut}(X)$ and $\text{Aut}(Y)$ are both automorphism groups, and $\text{Aut}(X)$ is a subset of $\text{Aut}(Y)$, then $\text{Aut}(X)$ is also a subgroup of $\text{Aut}(Y)$, since the two groups are groups of automorphisms on the same underling domain, and thereby have the same identity element and multiplication rule (composition).  The key difference is that SYM$^*$ compares group structure relative to a preferred embedding of $\text{Aut}(X)$ into $\text{Aut}(Y)$ (the identity map) determined by the fact that both are automorphisms on the same set. Subgroup, meanwhile, allows one to compare objects relative to \emph{any} injective group homomorphism between their automorphism groups.  No particular relationship between $X$ and and $Y$ is required or respected.  Thus we see a sense in which Subgroup is strictly `more liberal' than SYM$^*$.\footnote{Note that there are examples where SYM$^*$ and Subgroup agree that $X$ has at least as much structure as $Y$, but Subgroup \emph{also} rules that $Y$ has more structure than $X$ and SYM$^*$ does not.}

Subgroup does address sensitivity.  For instance, in Wilhelm's example of two isomorphic groups with different underlying sets, it is trivial to verify that they have the same amount of structure according to Subgroup. Likewise, if a topological space $(A, \tau)$ is compared to a set $B$ with the same cardinality as $A$, then Subgroup again makes the correct verdict: $(A,\tau)$ has at least as much structure as $B$. 

Wilhelm offers a few other reasons to prefer Subgroup to SYM$^*$.  One is that automorphism groups are groups, not sets, so the `subset relation' that SYM$^*$ employs is the wrong relation.  Second, Wilhelm argues that Subgroup is a ``strict generalization of SYM$^*$'' (p.~6365) and that this gives us reason to prefer it. Wilhelm explains why this is supposed to be a mark in favor of Subgroup as follows:
\begin{quote}
Subgroup expands the range of objects whose structures can be compared. So it supports more of the structural comparisons that mathematicians, physicists, and philosophers make. This, in fact, is the main reason why I prefer Subgroup [\ldots] to SYM$^*$. (p. 6365)
\end{quote}
Finally, Wilhelm claims that it makes intuitive verdicts. It makes many of the same verdicts that SYM$^*$ did.  It also deals well with cases like the isomorphic group case we have mentioned. 

As we have seen, Wilhelm's claim that Subgroup is superior because it explicitly involves the group structure of automorphism groups is a red herring. Both criteria involve group structure; they differ in what embeddings they allow.  Moreover, while it is true that Subgroup is a ``strict generalization'' of SYM$^*$, we contend that that by itself is not a reason to accept Subgroup.  A satisfactory generalization of SYM$^*$ needs to make sensible verdicts in the cases where it differs from SYM$^*$.  But Subgroup does not do this.  Consider, for example, the following verdicts that Subgroup makes.

\begin{itemize}
\item The group $\mathbb{Z}_5$ (automorphism group $\mathbb{Z}_4$) vs any set with cardinality 2.  Since the automorphism group of the latter can be properly embedded in the automorphism group of the former, according to Subgroup the set has at least as much structure as the group.
\item The vector space $\mathbb{R}^2$ vs the group $\mathbb{Z}$ (automorphism group $\mathbb{Z}_2$). $\mathbb{Z}$ has at least as much structure as $\mathbb{R}^2$ according to Subgroup, another unintuitive verdict, especially since the vector space has underlying group structure \textit{and} additional vector space structure on top of that.
\item The vector space $\mathbb{R}$ vs the vector space $\mathbb{R}^2$.  $\mathbb{R}$ has at least as much structure as $\mathbb{R}^2$ according to Subgroup, even though $\mathbb{R}^2$ contains many subspaces isomorphic to $\mathbb{R}$ and further structure relating those subspaces.  
\end{itemize}

We need not multiply examples; the ones given should suffice to make one nervous about Subgroup. In what follows, we will diagnose what is going wrong with that criterion.

\section{The Implicit Definability Conception}

We begin by discussing why SYM$^*$ works in the cases where it makes judgments. This argument has been before: see \cite{barrett2018a} and the references therein. The basic idea is that there is a sense in which $X$ has at least as much structure as $Y$ according to SYM$^*$ just in case $X$ actually has all of the structures that $Y$ has.  We can make this idea precise by considering some simple facts about definability in first-order logic and model theory.\footnote{See \cite{hodges2008} for further details.} 

A \textbf{signature} $\Sigma$ is a set of predicate symbols. (Our results generalize to the case of function and constant symbols as well.) The $\Sigma$-terms, $\Sigma$-formulas, and $\Sigma$-sentences are recursively defined in the standard way. A \textbf{$\mathbf{\Sigma}$-structure} $X$ is a nonempty set in which the symbols of $\Sigma$ have been interpreted. One recursively defines when a sequence of elements $a_1,\ldots, a_n\in X$ \textbf{satisfy} a $\Sigma$-formula $\phi(x_1,\ldots, x_n)$ in a $\Sigma$-structure $X$, written $X\vDash\phi[a_1,\ldots, a_n]$. We will use the notation $\phi^X$ to denote the set of tuples from the $\Sigma$-structure $X$ that satisfy a $\Sigma$-formula $\phi$. A \textbf{$\mathbf{\Sigma}$-sentence} is a $\Sigma$-formula with no free variables. 
An \textbf{automorphism} of a $\Sigma$-structure $X$ is a bijection from $X$ to itself that preserves the extensions of all of the predicates in $\Sigma$.

The basic set-up that we will employ in order to discuss definability is the following:
\begin{itemize}
\item Let $\Sigma_1$ and $\Sigma_2$ be signatures. The elements of $\Sigma_1$ and $\Sigma_2$ represent the `basic structures' on the two objects that we will consider. These can be thought of as the structures that are explicitly appealed to in the notation we use to describe the objects. 

\item Let $X$ be a $\Sigma_1$-structure and $Y$ a $\Sigma_2$-structure. We will think of $X$ and $Y$ as the two objects whose structures will we be comparing. We temporarily assume that $X$ and $Y$ have the same underlying set.
\end{itemize}
We need to make precise what it means for $X$ to define all of the basic structures of $Y$. So let $p\in\Sigma_2$ be one of the basic structures on $Y$. There are two particularly natural ways to make precise what it means for $X$ to define $p$. We say that $X$ \textbf{explicitly defines $p^Y$} if there is a $\Sigma_1$-formula $\phi$ such that $\phi^X=p^Y$. And we say that $X$ \textbf{implicitly defines} some subset $I\subset X\times\ldots\times X$ (like the structure $p^Y$) if $h[I]=I$ for every automorphism $h$ of $X$. We will focus on implicit definition.\footnote{There are several varieties of implicit definability in the literature. The condition we consider here is one of the weaker ones. See \cite{winnie1986}, \cite{barrett2017}, and references therein for further details.}

Here is the intuition behind these two notions of definability. If $X$ explicitly defines the structure $p^Y$, then $p^Y$ can be `constructed from' the basic structures in $\Sigma_1$.  On the other hand, suppose that $X$ implicitly defines $p^Y$. When this is the case, one often says that the structure $p^Y$ is `invariant under' or `preserved by' the symmetries of $X$. It is common to infer from this that $X$ comes equipped with the structure $p^Y$.\footnote{See \cite{dasgupta2014} or \cite{barrett2017} and the references therein for elaboration.}  The relation between these two varieties of definability is already well known. If $X$ explicitly defines $p^Y$, then $X$ implicitly defines $p^Y$. But the converse does not hold.\footnote{The converse \emph{would} hold if we restricted attention to complete theories or if we strengthened our definition of implicit definability.}

We have the following simple result.

\begin{prop}\label{prop:idc}
The following are equivalent:
\begin{enumerate}
\item For every symbol $p\in\Sigma_2$, $X$ implicitly defines $p^Y$.
\item $\text{Aut}(X)\subset \text{Aut}(Y)$.
\end{enumerate}
\end{prop}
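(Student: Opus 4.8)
The plan is to prove the two implications separately, each by directly unwinding the definitions of ``automorphism'' and ``implicit definition.'' The observation that makes both directions immediate is that, for a bijection $h$ of the common underlying set and a symbol $p \in \Sigma_2$ of arity $n$, the statement that $h$ preserves the extension $p^Y$---i.e.\ $(a_1, \ldots, a_n) \in p^Y$ iff $(h(a_1), \ldots, h(a_n)) \in p^Y$---is literally the equation $h[p^Y] = p^Y$. This single equation is simultaneously what it means for $h$ to respect $p$ as a relation of $Y$ and what the definition of implicit definability demands. So the whole proposition amounts to noting that both conditions are pinned to this equation, quantified in the same way over $\text{Aut}(X)$.

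For $(2) \Rightarrow (1)$, I would fix $p \in \Sigma_2$ and an arbitrary $h \in \text{Aut}(X)$; the hypothesis $\text{Aut}(X) \subseteq \text{Aut}(Y)$ places $h$ in $\text{Aut}(Y)$, so $h$ preserves $p^Y$, giving $h[p^Y] = p^Y$. Since $h$ ranges over all of $\text{Aut}(X)$, this is exactly the assertion that $X$ implicitly defines $p^Y$, and since $p$ was arbitrary, (1) holds.

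For $(1) \Rightarrow (2)$, I would take any $h \in \text{Aut}(X)$ and check that it qualifies as an automorphism of $Y$. Because $X$ and $Y$ share a domain, $h$ is automatically a bijection of that domain, so only preservation of the predicates remains to verify. For each $p \in \Sigma_2$, implicit definability of $p^Y$ applied to the automorphism $h$ of $X$ yields $h[p^Y] = p^Y$, which is precisely preservation of $p^Y$; hence $h \in \text{Aut}(Y)$, and therefore $\text{Aut}(X) \subseteq \text{Aut}(Y)$.

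I do not expect a genuine obstacle here: the proposition is essentially a restatement of the definitions, with the two conditions tied to the same equation $h[p^Y] = p^Y$ quantified identically over $\text{Aut}(X)$. The only point requiring care is the standing assumption that $X$ and $Y$ have the same underlying set---this is what makes an automorphism of $X$ even a candidate to be an automorphism of $Y$, and hence what makes the inclusion in (2) well-posed in the first place. Without it, neither direction of the equivalence would so much as typecheck.
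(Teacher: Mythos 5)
Your proof is correct and matches the paper's approach: the paper's entire proof is ``Immediate from definitions,'' and your argument simply spells out that unwinding, correctly identifying that both conditions reduce to the equation $h[p^Y]=p^Y$ quantified over $h\in\text{Aut}(X)$. Your closing remark about the standing same-underlying-set assumption is also exactly the right caveat, as it is what makes the inclusion in (2) well-posed.
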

\begin{proof}
Immediate from definitions.
\end{proof}

Prop. 1 establishes that SYM$^*$ says that $X$ has at least as much structure as $Y$, when $X$ implicitly defines all the structures of $Y$.  This is the Implicit Definability Conception of structural comparison mentioned above.  We take this to be a natural (weak) understanding of what it means to compare amounts of structure.  It is also the only conception of structural comparison that has been explicitly articulated and defended \citep[e.g. in][]{barrett2017,barrett2018a}.

The analogue of Prop. 1 does not hold for Subgroup. Subgroup does not bear the same relationship to definability that SYM$^*$ does.

\begin{example}
Let $\Sigma_1=\emptyset$ and $\Sigma_2=\{p\}$ be signatures, with $p$ a unary predicate symbol. Consider the $\Sigma_1$-structure $A$ whose underlying set is $\mathbb{N}$, and the $\Sigma_2$-structure $B$ whose underlying set is $\mathbb{N}$ with $p^B=\{0\}$. SYM$^*$ says $A$ has less structure than $B$. All the automorphisms of $B$ are automorphisms of $A$ but not vice versa. 

Subgroup says that they have the same amount of structure. Clearly $\text{Aut}(B)$ is isomorphic to a subgroup of $\text{Aut}(A)$. But conversely, the fact that $A$ is isomorphic to the set $B-p^B$ implies that $\text{Aut}(A)$ is isomorphic to $\text{Aut}(B)$, since every automorphism of $B$ is determined by its action on $B-p^B$.
\end{example}

In this example, $A$ and $B$ have the same amount of structure according to Subgroup, but $A$ does not implicitly define all of the structure of $B$.  Indeed, we generate $B$ by adding structure to $A$. It is hard to imagine a coherent understanding of structure according to which $B$ does not have more structure than $A$.  At the very least Subgroup does not implement the IDC.





The example also shows that the automorphism group of an object (up to isomorphism) does not fully encode how much structure an object has.  This means that a satisfactory criterion for comparing amounts of structure will need to appeal to \textit{more} than merely an object's automorphism group up to isomorphism. In the next section, we consider what else one might need.

\section{A Different Generalization of SYM$^*$}

As we have observed, the principal difference between SYM$^*$ and Subgroup is that SYM$^*$ compares the automorphism groups of $X$ and $Y$ only relative to a particular embedding, generated by a specific relationship between $X$ and $Y$: namely, the identity map on their respective domains.  If that map generates a group homomorphism, then $\text{Aut}(X)$ will be a subgroup of $\text{Aut}(Y)$.  But it is only the group homomorphism (possibly) generated by this particular relationship between $X$ and $Y$ that matters. 

These remarks suggest a different way to generalize SYM$^*$ to solve sensitivity.  The proposal makes use of the following Proposition.  In what follows, for any $\Sigma$-structure $Z$, $\text{dom}(Z)$ will refer to the domain of $Z$, i.e., its underlying point set. As before, let $X$ be a $\Sigma_1$-structure and $Y$ a $\Sigma_2$-structure. 

\begin{prop}\label{prop:sym+}
Suppose $f:\text{dom}(Y)\rightarrow \text{dom}(X)$ is injective. Then the following are equivalent:
\begin{enumerate}
    \item there is a group homomorphism $F:\text{Aut}(X)\rightarrow\text{Aut}(Y)$ that commutes with $f$, in the sense that $s\circ f=f\circ Fs$ for every automorphism $s$ of $X$.
    \item $X$ implicitly defines $f[I]$ for every subset $I$ that $Y$ implicitly defines.
\end{enumerate}
Moreover, if an $F$ as described in 1. exists, it is unique.
\end{prop}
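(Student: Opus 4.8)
The plan is to prove both implications by analyzing the image $J := f[\text{dom}(Y)]\subseteq\text{dom}(X)$ and exploiting two particular subsets that $Y$ \emph{automatically} implicitly defines: its own domain $\text{dom}(Y)$, and the extension $p^Y$ of each symbol $p\in\Sigma_2$ (the latter because, by the very definition of automorphism, every $h\in\text{Aut}(Y)$ satisfies $h[p^Y]=p^Y$).

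For $(2)\Rightarrow(1)$, I would first apply (2) to $I=\text{dom}(Y)$. Since $Y$ implicitly defines its domain, (2) says $X$ implicitly defines $f[\text{dom}(Y)]=J$, so $s[J]=J$ for every $s\in\text{Aut}(X)$. Because $f$ is injective it is a bijection onto $J$, so I can define $Fs:=f^{-1}\circ(s|_J)\circ f$, a bijection of $\text{dom}(Y)$ satisfying $f\circ Fs=s\circ f$ by construction—exactly the desired commuting relation. The crucial step is to verify that each $Fs$ is genuinely an \emph{automorphism} of $Y$, i.e.\ that it preserves every $p^Y$. Here I would apply (2) to $I=p^Y$, which yields $s[f[p^Y]]=f[p^Y]$; combining this with $f[(Fs)[p^Y]]=s[f[p^Y]]$ (a componentwise consequence of $f\circ Fs=s\circ f$) gives $f[(Fs)[p^Y]]=f[p^Y]$, and injectivity of $f$ then forces $(Fs)[p^Y]=p^Y$. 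That $F$ is a homomorphism with $F(\text{id}_X)=\text{id}_Y$ follows from a short computation using $s[J]=J$ and $f^{-1}\circ f=\text{id}$.

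For $(1)\Rightarrow(2)$, given $F$ and an arbitrary subset $I$ that $Y$ implicitly defines, I would argue directly. For any $s\in\text{Aut}(X)$, applying $s\circ f=f\circ Fs$ componentwise to the tuples of $I$ gives $s[f[I]]=f[(Fs)[I]]$; since $Fs\in\text{Aut}(Y)$ and $Y$ implicitly defines $I$, we have $(Fs)[I]=I$, whence $s[f[I]]=f[I]$. As $s$ was arbitrary, $X$ implicitly defines $f[I]$. Uniqueness is immediate: if $F$ and $F'$ both commute with $f$, then $f\circ Fs=s\circ f=f\circ F's$ for every $s$, and injectivity of $f$ yields $Fs=F's$.

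The main obstacle is the verification inside $(2)\Rightarrow(1)$ that the candidate $Fs=f^{-1}\circ s\circ f$ preserves the $\Sigma_2$-structure, rather than being merely a bijection of $\text{dom}(Y)$. Everything else—the definition of $Fs$, the homomorphism property, commuting with $f$, and uniqueness—is essentially formal once one knows that $s$ fixes $J$. The genuine content of hypothesis (2) is precisely what promotes $Fs$ from a bijection to an automorphism, and it does so through its application to the predicate extensions $p^Y$.
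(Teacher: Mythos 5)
Your proof is correct and takes the only natural route: the paper itself gives no details, stating merely that ``both directions of the proof follow simply from definitions,'' and your argument is exactly the intended unpacking---constructing $Fs=f^{-1}\circ(s|_{f[\mathrm{dom}(Y)]})\circ f$ after noting that instantiating (2) at $I=\mathrm{dom}(Y)$ makes the image $f[\mathrm{dom}(Y)]$ invariant under every $s\in\mathrm{Aut}(X)$, and promoting $Fs$ to an automorphism via the instances $I=p^Y$. The componentwise transport in $(1)\Rightarrow(2)$ and the injectivity-based uniqueness argument likewise match what the paper leaves implicit, so there is nothing to add.
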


Note that condition 1 does not quite imply that $F$ is injective, and so $F$ does not establish that $\text{Aut}(X)$ is isomorphic to a subgroup of $\text{Aut}(Y)$.  But it does imply the following, which is `close' to $F$ being injective (and becomes `closer' the `closer' to surjective $f$ is): If $Fs=Fs'$ for automorphisms $s$ and $s'$ of $X$, then $s|_{f[Y]}=s'|_{f[Y]}$. This follows immediately from the fact that $f$ is injective. For suppose that $f(x)\in f[Y]$. We know that 
$$
s\circ f(x)=f\circ Fs(x)=f\circ Fs'(x)=s'\circ f(x)
$$
The first and third equalities follow from condition 1, while the second follows from our assumption that $Fs=Fs'$. Since $f(x)$ was an arbitrary element of $f[Y]$, we have that $s$ and $s'$ are equal when restricted to $f[Y]$.

\begin{proof}
Both directions of the proof follow simply from definitions.
\end{proof}

This proposition suggests a different criterion for comparing structure.

\begin{sym+}
A mathematical object $X$ has at least as much structure as a mathematical object $Y$, relative to an injective function $f:Y\rightarrow X$, if (and only if) there exists a group homomorphism  $F:\text{Aut}(X)\rightarrow \text{Aut}(Y)$ that commutes with $f$.
\end{sym+}

SYM$^+$, like Subgroup, is a strict generalization of SYM$^*$.  SYM$^*$ is just the special case of SYM$^+$ where $f$ is the identity on $Y$.  SYM$^+$ is also compatible with the IDC, as Prop.~2 shows. And SYM$^+$ also solves sensitivity, since we can compare structures with different domains using SYM$^+$.  But according to SYM$^+$, this can happen \emph{only} relative to a particular injective map.  

In many cases, the map $f$ is fixed by context.  For instance, isomorphic groups have the same amount of structure, according to SYM$^+$, relative to any map $f$ that realizes their isomorphism.  A topological space $(A,\tau)$ has at least as much structure as a set $B$ of the same cardinality, according to SYM$^+$, relative to any bijection between $A$ and $B$.  
Still, one might worry that structural comparisons should not require a map between the objects.  How much structure something has does not depend on maps to other things; thus, comparing the structure of two different things have should not depend on maps either.  But this intuition fails for criteria that compare automorphism groups.  

The reason is that an automorphism group is not just a group.  It is a particular kind of \emph{group representation}: specifically, a representation of a group as the automorphisms on a given object.  The injective map $f$ is important because it allows us to compare $\text{Aut}(X)$ and $\text{Aut}(Y)$ \emph{as automorphism groups}.  Results like Props. \ref{prop:idc} and \ref{prop:sym+} show that to capture information about implicit definability using automorphism groups, one must keep track of how the abstract group structure is represented as maps on the domain of the object.  In order to compare what structures are implicitly definable for different objects using their automorphism groups, one needs to know not only how the groups are related, but how their representations as automorphism groups are related. 

Another worry about SYM$^+$ is that, since it considers only \emph{injective} maps $f$,  it will never rule that an object $X$ has at least as much structure as an object $Y$ if $Y$ has greater cardinality than $X$.  This is because if $Y$ has more elements than $X$, there will be no injective maps from $Y$ to $X$.  This may seem puzzling, but we set this it aside until the end of the next section.  

We have emphasized how SYM$^+$ relates to SYM$^*$.  But it is also related to Subgroup.  Consider the following proposition.  

\begin{prop}\label{prop:subgroup}
Let $f:\text{dom}(Y)\rightarrow \text{dom}(X)$ be a bijection. Then the following are equivalent:
\begin{enumerate}
    \item there is an injective group homomorphism $F:\text{Aut}(X)\rightarrow\text{Aut}(Y)$ that commutes with $f$, in the sense that $f\circ s=Fs\circ f$ for every automorphism $s$ of $X$.
    \item $X$ implicitly defines $f[I]$ for every subset $I$ that $Y$ implicitly defines.
\end{enumerate}
Moreover, if an $F$ as described in 1. exists, it is unique.
\end{prop}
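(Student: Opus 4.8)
The plan is to read Proposition~\ref{prop:subgroup} as the bijective special case of Proposition~\ref{prop:sym+}, and then supply the two features that are new here: that the homomorphism $F$ may be taken \emph{injective}, and that it is \emph{unique}. Since a bijection is in particular injective, Proposition~\ref{prop:sym+} already delivers the equivalence between the existence of \emph{some} group homomorphism $F\colon\text{Aut}(X)\to\text{Aut}(Y)$ commuting with $f$ (in the sense $s\circ f=f\circ Fs$) and condition~2. So almost all of the work is done; what remains is to see that surjectivity of $f$ forces this $F$ to be injective, and that injectivity of $f$ forces it to be unique.

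For a self-contained treatment I would build $F$ by hand. Because $f$ is a bijection, set $Fs=f^{-1}\circ s\circ f$; this is the unique map $Y\to Y$ satisfying $f\circ Fs=s\circ f$, and $F(s\circ t)=Fs\circ Ft$ is immediate, as is injectivity of $F$ (conjugate back by $f$). The only step with any content is checking that each $Fs$ lands in $\text{Aut}(Y)$, i.e.\ that it preserves $p^Y$ for every $p\in\Sigma_2$; this is exactly where condition~2 enters. Each extension $p^Y$ is implicitly defined by $Y$ (by definition every automorphism of $Y$ preserves it), so condition~2 says $X$ implicitly defines $f[p^Y]$, that is, $s[f[p^Y]]=f[p^Y]$ for all $s\in\text{Aut}(X)$. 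Pulling this back through the bijection, $Fs[p^Y]=f^{-1}\big[s[f[p^Y]]\big]=f^{-1}\big[f[p^Y]\big]=p^Y$, so $Fs\in\text{Aut}(Y)$.

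The converse direction, $1\Rightarrow2$, is the coordinatewise computation behind Proposition~\ref{prop:sym+}: for a subset $I$ implicitly defined by $Y$, applying the maps componentwise to tuples gives $s[f[I]]=(s\circ f)[I]=(f\circ Fs)[I]=f\big[Fs[I]\big]=f[I]$, where the last equality holds because $Fs\in\text{Aut}(Y)$ and $I$ is implicitly defined by $Y$. For the injectivity asserted in condition~1, I would appeal to the remark following Proposition~\ref{prop:sym+}: any commuting $F$ satisfies $Fs=Fs'\Rightarrow s|_{f[Y]}=s'|_{f[Y]}$, and since $f$ is onto, $f[Y]=\text{dom}(X)$, so $s=s'$. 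Uniqueness uses only injectivity of $f$: if $F$ and $F'$ both commute with $f$, then $f\circ Fs=s\circ f=f\circ F's$ for every $s$, and left-cancelling the injection $f$ gives $F=F'$.

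I do not expect a genuine obstacle: the proposition is essentially a corollary of Proposition~\ref{prop:sym+} together with the observation that surjectivity of $f$ promotes its ``agree on $f[Y]$'' conclusion to honest injectivity of $F$. The only place to be careful is the bookkeeping in the automorphism check, where $f$, $f^{-1}$, and $s$ must be applied coordinatewise to $n$-tuples; it is precisely the bijectivity of $f$ that makes $f^{-1}\circ f$ act as the identity on tuples, and this is what converts condition~2's statement about $f[p^Y]$ into the statement that $Fs$ preserves $p^Y$.
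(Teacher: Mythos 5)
Your proposal is correct and follows essentially the same route as the paper, whose proof is precisely ``this follows immediately from Proposition~\ref{prop:sym+} and the remark following it'': surjectivity of $f$ upgrades the remark's conclusion $s|_{f[Y]}=s'|_{f[Y]}$ to injectivity of $F$, and uniqueness comes from cancelling the injective $f$. Your additional self-contained construction $Fs=f^{-1}\circ s\circ f$ and the coordinatewise verification are sound extra detail (and correctly use the typechecking convention $s\circ f=f\circ Fs$ from Proposition~\ref{prop:sym+}), but they do not change the approach.
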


\begin{proof}
This follows immediately from Proposition 2 and the remark following it.
\end{proof}

Prop. \ref{prop:subgroup} tells us that SYM$^+$ looks like Subgroup in the special case where $X$ and $Y$ have equinumerous domains.  Relative to a bijection $f:Y\rightarrow X$, SYM$^+$ says $X$ has at least as much structure as $Y$ only if $\text{Aut}(X)$ is isomorphic to a subgroup of $\text{Aut}(Y)$.  Thus, Subgroup provides a necessary condition.  But it is not sufficient.  The subgroup of $\text{Aut}(Y)$ must be the image of a homomorphism from $\text{Aut}(X)$ that commutes with $f$.  Some bijection must be specified.  Different choices of bijection may produce different verdicts.  

\section{Triviality}

We have now seen that SYM$^+$ is an automorphism-based criterion of structural comparison that is compatible with the ICD and solves sensitivity.  But even so, SYM$^+$ is not fully adequate. The reason stems from another problem, which we will \textbf{triviality}.  Triviality is a problem for all  automorphism criteria of structure comparison.\footnote{Triviality has been discussed by \cite{barrett2018a}; see also \citet{WeatherallWNCE}, who raises a related triviality concern for category theoretic approaches.}  Wilhelm mentions this problem but does not fully appreciate it (p.~6365).  Here we present a new example and some new arguments that clarify why triviality matters.  

In brief, the problem is that all criteria that look (only) to automorphism groups make implausible verdicts about objects that have trivial automorphism groups, i.e., whose only automorphism is the identity map.  Examples of such objects are: any set with one element; any vector space with a fixed basis; the group $\mathbb{Z}_2$; any prime field; and so on. Such objects form a diverse collection; automorphism-based criteria struggle with that diversity. 

Consider SYM$^+$.\footnote{\cite{barrett2018a} offers an example to make the same point in connection with SYM$^*$.}  According to this criterion, given any two objects $X$ and $Y$, if $X$ has trivial automorphism group and cardinality as least as great as $Y$, then $X$ has at least as much structure as $Y$. This is true no matter what structure $X$ and $Y$ actually carry.  The problem is arguably even worse for Subgroup.  According to Subgroup, given any mathematical structure $Y$, \emph{every} object with trivial automorphism group has at least as much structure as $Y$.    

Wilhelm is aware of triviality. But he argue that it ``has some independent motivation'':
\begin{quote}
For suppose $X$ is a spacetime with a trivial automorphism group. Then only the identity transformation preserves all of the structure of $X$. Only the identity transformation leaves $X$ invariant. In other words, the spatiotemporal structure of $X$ is so rich and complicated that every other transformation fails to preserve it. Therefore, $X$'s structure is `maxed out'. $X$ is as structured as can be. \citep[p. 6365]{wilhelm2021}
\end{quote}
Something similar can be said for SYM$^+$.  According to SYM$^+$, if an object has trivial automorphism group, no other object can have more structure, relative to any injective map $f$.  SYM$^+$, too, says such objects are ``maxed out''.

This is the wrong verdict.  Consider the following example.

\begin{example}
It is well known that there exist spacetimes with trivial isometry groups. David Malament has sketched an elegant way to construct an example: start with Minkowski spacetime and then excise a compact region ``shaped like a giraffe'' from the manifold. Here, we present a precise variation of this idea. We restrict attention to the giraffe region itself, take its interior, and consider it as a spacetime in its own right. The resulting example is flat and, if the (radically idealized!) giraffe region is suitably chosen, it has an underlying manifold diffeomorphic to $\mathbb{R}^n$.

\begin{figure}[htp!] \centering
 \includegraphics[width=2in]{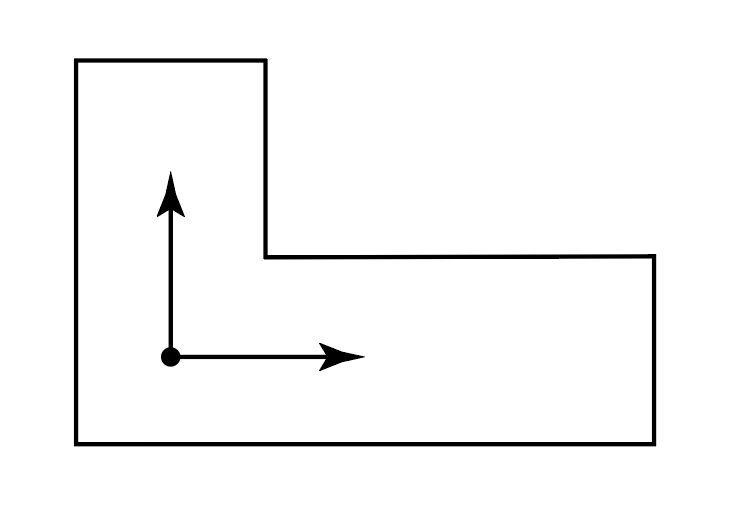} 
 \caption{The `giraffe' spacetime $(M, g_{ab})$ with vectors $t^a$ and $x^a$ at p.}.
\end{figure}

Let $(\mathbb{R}^2, g_{ab})$ be two-dimensional Minkowski spacetime where $g_{ab}=\nabla_at\nabla_bt-\nabla_ax\nabla_bx$. Let $M_1=\{(t,x) \in \mathbb{R}^2: 0<t<2, 0<x<3\}$. Let $M_2=\{(t,x) \in M_1: 1<t<2, 1<x<3\}$. Let $M=M_1 - M_2$ and consider the spacetime $(M, g_{ab})$. We see that since $M$ is a non-empty open star domain in $\mathbb{R}^2$, it is diffeomorphic to $\mathbb{R}^2$. Let $p=(1/2,1/2)$ and consider the vectors $t^a=(\partial/\partial t)^a$ and $x^a=(\partial/\partial x)^a$ at $p$ (see Figure 1). One can verify that any isometry $\varphi: M \rightarrow M$ is such that $\varphi(p)=p$, $\varphi_*(t^a)=t^a$, and $\varphi_*(x^a)=x^a$. From this it follows that $\varphi$ is the identity map given that isometries are rigid to first order at a given point (Geroch 1969).  
\end{example}


This example is striking because, while it has a trivial automorphism group, \emph{locally} its structure is hardly maxed out.  For instance, one could add an orientation field $\epsilon_{ab}$ to this spacetime.  That should generate an object with more structure, since metric+orientation is more structure than metric. But none of the criteria under consideration reflect that. One might also compare this example to a similar construction where one begins with a metric that admits fewer automorphisms.  For instance, if Newtonian spacetime has more structure than Minkowski spacetime, then a giraffe modeled on Newtonian spacetime should have more structure than a Minkowski giraffe.  None of the criteria we have discussed capture this, either.

The moral we wish to draw is that no automorphism criterion will fully capture all of the structural comparisons one might wish to make.  SYM$^+$ is useful for some kinds of cases, but not for others---much like SYM$^*$.  On the other hand, one can show a sense in which the giraffe intuitions just described may be made precise in a way that is compatible with the IDC.

Suppose again that $X$ is a $\Sigma_1$-structure and $Y$ a $\Sigma_2$-structure. If $f:\text{dom}(X)\rightarrow \text{dom}(Y)$ is an injective map, we let $f[X]$ be the $\Sigma_2$-structure obtained from $Y$ by `restricting' $Y$ to the image of $f$. Then we have the following.

\begin{prop}\label{prop:local}
Let $f:\text{dom}(X)\rightarrow \text{dom}(Y)$ be injective. Then the following are equivalent:
\begin{enumerate}
    \item there is a group homomorphism $F:\text{Aut}(X)\rightarrow\text{Aut}(f[X])$ that commutes with $f$, in the sense that $f\circ s=Fs\circ f$ for every automorphism $s$ of $X$.
    \item $X$ implicitly defines $f^{-1}[I]$ for every subset $I$ that $f[X]$ implicitly defines.
\end{enumerate}
Moreover, if an $F$ as described in 1. exists, it is unique.
\end{prop}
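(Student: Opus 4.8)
The plan is to deduce Proposition~\ref{prop:local} from Proposition~\ref{prop:subgroup} (equivalently, from Proposition~\ref{prop:sym+} together with the remark following it) rather than re-proving it from scratch. The key observation is that, although $f$ is only assumed injective as a map $\text{dom}(X)\rightarrow\text{dom}(Y)$, by the definition of $f[X]$ it corestricts to a \emph{bijection} $\text{dom}(X)\rightarrow\text{dom}(f[X])$: its image is exactly $\text{dom}(f[X])=f[\text{dom}(X)]$, and it is injective by hypothesis. We are therefore comparing $X$ with a second $\Sigma_2$-structure, namely $f[X]$, whose domain is in bijective correspondence with $\text{dom}(X)$, and this places us squarely in the bijective regime of Proposition~\ref{prop:subgroup}.

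Concretely, I would apply Proposition~\ref{prop:subgroup} with $X$ in the role of $X$, the structure $f[X]$ in the role of $Y$, and the bijection $f^{-1}:\text{dom}(f[X])\rightarrow\text{dom}(X)$ in the role of the bijection from the second structure's domain to the first. Under this substitution, condition~2 of Proposition~\ref{prop:subgroup} becomes exactly the claim that $X$ implicitly defines $f^{-1}[I]$ for every subset $I$ that $f[X]$ implicitly defines, which is condition~2 here. Condition~1 supplies an injective homomorphism $F$ commuting with $f^{-1}$; since $f$ and $f^{-1}$ are mutually inverse bijections, commuting with $f^{-1}$ is equivalent to commuting with $f$ as stated, and the uniqueness clause transfers directly. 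The only point of care is that the homomorphism produced is automatically injective, because $f$ corestricts to a bijection, so there is no gap between the ``group homomorphism'' of the present statement and the ``injective group homomorphism'' of Proposition~\ref{prop:subgroup}; this is precisely the sharpening the remark after Proposition~\ref{prop:sym+} records when one passes from a mere injection to a bijection.

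Should the direction conventions prove fiddly to align, I would instead verify the equivalence directly, which is short. In either direction the commuting relation forces $F$ to be conjugation, $Fs=f\circ s\circ f^{-1}$, giving uniqueness immediately. For $(1)\Rightarrow(2)$, given $I$ with $t[I]=I$ for all $t\in\text{Aut}(f[X])$, a one-line computation from $f\circ s=Fs\circ f$ yields $s[f^{-1}[I]]=f^{-1}[Fs[I]]=f^{-1}[I]$ for every $s\in\text{Aut}(X)$, so $X$ implicitly defines $f^{-1}[I]$. For $(2)\Rightarrow(1)$, I would set $Fs=f\circ s\circ f^{-1}$ and check it lands in $\text{Aut}(f[X])$: for each $q\in\Sigma_2$, its extension $q^{f[X]}$ is trivially implicitly defined by $f[X]$, so hypothesis~2 gives that each $s\in\text{Aut}(X)$ preserves $f^{-1}[q^{f[X]}]$, whence $Fs$ preserves $q^{f[X]}$; that $F$ is a homomorphism commuting with $f$ is then immediate.

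I do not expect a genuine obstacle here; the difficulty is conceptual rather than computational. The one thing to watch is that $f[X]$ must be treated as a $\Sigma_2$-structure \emph{in its own right}, so that the automorphisms and implicitly definable sets appearing in both conditions are intrinsic to $f[X]$ and make no further reference to the ambient $Y$. Once $f[X]$ is so regarded, the proposition is just Proposition~\ref{prop:subgroup} applied along the bijection induced by $f$, with $Y$ serving only to furnish the structure from which the restriction $f[X]$ is carved out. The remaining bookkeeping---keeping straight which of $f$ and $f^{-1}$ appears, and on which side of each composition---is the only place an error could creep in.
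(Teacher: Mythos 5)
Your proposal is correct, and your primary route differs from the paper's. The paper proves Proposition~\ref{prop:local} directly, saying only that it ``follows from definitions''---which is exactly the fallback computation you sketch: the commutation relation forces $Fs=f\circ s\circ f^{-1}$ (with $f^{-1}$ the inverse of the corestriction of $f$ onto $\text{dom}(f[X])$), giving uniqueness, and the two implications follow by pushing implicitly defined sets back and forth along $f$, using that $q^{f[X]}$ is trivially implicitly defined by $f[X]$ for each $q\in\Sigma_2$ in the direction $(2)\Rightarrow(1)$. Your main argument instead derives the proposition from Proposition~\ref{prop:subgroup}, applied to the pair $(X, f[X])$ along the bijection that $f$ induces onto its image, and this reduction is sound: condition~2 of Proposition~\ref{prop:subgroup} instantiates to condition~2 here; commuting with the corestricted bijection (or its inverse) is equivalent to commuting with $f$; and the apparent mismatch between the ``injective group homomorphism'' there and the mere ``group homomorphism'' here is closed by your observation that injectivity is automatic---if $Fs=Fs'$ then $f\circ s=f\circ s'$, so $s=s'$ by injectivity of $f$---which is precisely the content of the paper's remark that the $F$ of Proposition~\ref{prop:local} is guaranteed injective. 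What your reduction buys is an explicit articulation of how the propositions relate (Prop.~\ref{prop:local} is Prop.~\ref{prop:subgroup} specialized to $X$ and the substructure $f[X]$), which the paper leaves implicit; what the direct check buys is immunity to the direction conventions, and your caution there is warranted: as printed, the commutation equation in Proposition~\ref{prop:subgroup}, namely $f\circ s=Fs\circ f$ with $f:\text{dom}(Y)\rightarrow\text{dom}(X)$, is ill-typed and must be read as $s\circ f=f\circ Fs$, following Proposition~\ref{prop:sym+}. Either of your two routes constitutes a complete proof.
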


In this case, when condition 1 holds, $F$ is guaranteed to be injective, by a similar argument as given above.

\begin{proof}
The proof once again follows from definitions.
\end{proof}

This proposition is helpful in a few ways.  First, it captures a sense in which an object $X$ may have at least as much structure as `part' of another object $Y$ (In the proposition the `part' is represented by the structure $f[X]$.)  In the giraffe example, suitably chosen regions of two-dimensional Newtonian spacetime have at least as much structure as (regions) of the Minkowski giraffe.  This construction allows us to say how one can `add' structure to the Minkowski giraffe even though its automorphism group is trivial.

Prop. \ref{prop:local} is also useful in that it clarifies how an object $X$ may have more structure than an object $Y$ with greater cardinality, relative to some map $f$ (now going from $X$ to $Y$).  The proposition shows that $X$ may have more structure than the \emph{part} of $Y$ (given by $f[X]$) when $X$ defines all of the structure that $Y$ has on the image of $X$ under $f$. 


\section{Conclusion}

This paper has considered several automorphism-based criteria for comparing the amount of structure of mathematical objects.  They all faced difficulties.  SYM$^*$ suffers from sensitivity, whereas Subgroup makes non-sensical verdicts.   We went on to introduce a new criterion for structure comparison, SYM$^+$, which extends SYM$^*$, solves sensitivity, and conforms with the IDC.  We also showed how SYM$^+$ captures (some of) the intuition behind Subgroup.

Even so, we then argued that no automorphism-based criterion is completely satisfactory, because all of them suffer from triviality.  Triviality is not new, but we introduced an example that we feel highlights the difficulty.  The example shows that objects with trivial automorphism groups need not have ``maxed out'' structure, since in many cases, one can add additional structure to such objects.

We wish to conclude by tying up three loose ends.  First, observe that we did not propose a new criterion for structural comparison modeled on Prop. \ref{prop:local}, as SYM$^+$ was modeled on Prop. \ref{prop:sym+}.  One could do so.  But we would urge a different perspective.  Prop. \ref{prop:local} captures a fine-grained relationship whose interpretation depends on things like the map $f$ and the relationship between $f[X]$ and $Y$.  It would not be the right criterion in all cases.  We think the right moral to draw from the arguments in the foregoing is that (1) the IDC is the right way of thinking about structural comparisons; but (2) there is not a single criterion that implements that conception in all cases.  Instead, we would endorse a tool-box view, where context and careful consideration of the questions at hand guide which of a variety of precise criteria one uses to compare the structure of different objects.  

The second remark concerns Subgroup$_2$, the second criterion of structural comparison proposed by Wilhelm.  This criterion adds to Subgroup the possibility that one object has at least as much structure as another if the former's automorphism group can be generated from that of the latter as the limit of a one-parameter family of representations.  This proposal strikes us as an \emph{ad hoc} attempt to save the idea that Galilean spacetime has more structure than Minkowski spacetime, since the Galilean group can be generated as the limit of a one parameter family of representations of the Poincar\'e group.  But whatever its motivations might be, we observe that there is no relationship between group contraction and the IDC.  Thus, if group contraction is to be included in a criterion of structure comparison, some other view of what the criteria intend to capture must be articulated.

Finally, we return to a remark from the Introduction, that Subgroup arises as the specialization of another criterion of structure comparison, based on the \citet{baezsps} ``property-structure-stuff'' forgetful functor approach, to the case where one is comparing individual objects.  This approach, in general, has been adopted by several philosophers to compare, for instance, different formulations of Newtonian gravitation and electromagnetism.  But our arguments against Subgroup are also compelling objections to this category-theoretic approach.  We suggest that something is missing from the category theoretic approach, and that more work is needed on that subject.




\section*{Acknowledgments}
The authors are grateful to Hans Halvorson and David Malament for many discussions about topics related to this paper.  We are also grateful to Isaac Wilhelm for his comments on the manuscript.

\bibliographystyle{apalike}
\bibliography{masterbib}
\end{document}